\newtheorem{theorem}{Theorem}
\newtheorem{example}[theorem]{Example}
\newtheorem{definition}[theorem]{Definition}
\newtheorem{remark}[theorem]{Remark}
\newcommand{\beq}{\begin{eqnarray}}
\newcommand{\eeq}{\end{eqnarray}}
\newcommand{\field}[1]{\mathbb{#1}}
\newcommand{\F}{\field{F}}
\newcommand{\cM}{{\cal M}}
\newcommand{\cC}{{\cal C}}
\newfont{\bbb}{msbm10 scaled 500}
\newfont{\bb}{msbm10 scaled 1100}
\newcommand{\cv}{{\bf c}}
\newcommand{\fv}{{\bf f}}
\newcommand{\uv}{{\bf u}}
\newcommand{\vv}{{\bf v}}
\newcommand{\xv}{{\bf x}}
\newcommand{\Um}{{\bf U}}
\newcommand{\Vm}{{\bf V}}
\newcommand{\Ac}{{\cal A}}
\newcommand{\Mc}{{\cal M}}
\newcommand{\gammav}{\hbox{\boldmath$\gamma$}}
\definecolor{OXO-emph}{RGB}{153,0,0}
\newcommand\ceilb[1]{\left\lceil #1 \right\rceil}
\begin{document}

\sloppy

\title{Optimal Locally Repairable Codes\\ via Rank Metric Codes}
%
 \author{
   \IEEEauthorblockN{
   Natalia~Silberstein, Ankit~Singh~Rawat, O.~Ozan~Koyluoglu, and~Sriram~Vishwanath}
   \IEEEauthorblockA{Department of Electrical and Computer Engineering\\
     The University of Texas at Austin\\
     TX 78712 USA\\
     Email:
     \{natalys, ankitsr, ozan, sriram\}@austin.utexas.edu.}
 }
\maketitle


\begin{abstract}

This paper  presents a new explicit construction for  locally repairable codes (LRCs) for distributed storage systems which possess all-symbols locality and  maximal possible minimum distance, or equivalently, can tolerate the maximal number of node failures.  This construction,  based on maximum rank distance (MRD) Gabidulin codes,
 provides new optimal vector and scalar LRCs. In addition, the paper also discusses mechanisms by which codes obtained using this construction can be used to construct  LRCs with efficient repair of failed nodes by combination of LRC with regenerating codes.
\end{abstract}


\IEEEpeerreviewmaketitle


\section{Introduction}

In distributed storage systems (DSS), it is desirable that data be reliably stored over a network of nodes in such a way that a user (\emph{data collector}) can retrieve the stored data even if some nodes fail. To achieve such a resilience against node failures, DSS introduce data redundancy based on different coding techniques. For example, erasures codes are widely used in such systems: When using an $(n,k)$ code, data to be stored is  first divided into $k$ blocks; subsequently, these $k$ information blocks are encoded into $n$ blocks stored on $n$ distinct nodes in the system.  In addition, when a single node fails, the system reconstructs the data stored in the failed node to keep the required level of redundancy.  This process of data reconstruction for a failed node is called \emph{node repair process}~\cite{dimakis}. During a node repair process,  the node which is added to the system to replace the failed node downloads  data from a set of appropriate and accessible nodes.

There are two important goals that guide the design of codes for DSS: reducing the \emph{repair bandwidth}, i.e. the amount of data downloaded from system nodes during the node repair process,  and achieving \emph{locality}, i.e. reducing the number of nodes participating in the node repair process. These goals underpin the design of two families of codes for DSS called \emph{regenerating codes} (see~\cite{dimakis,wu3,shah,suh1,rashmi, bruck11,DRWS11,DaOg11} and references therein) and \emph{locally repairable codes} (see~\cite{gopalan, oggier_hom, oggier_proj, prakash, dimitris, HSXOCGLY12, SAPDVCB13, pyramid, HL-M2007,Kumar12,Hollman13}), respectively.

In this paper we focus on the locally repairable codes (LRCs). Recently, these codes have drawn significant attention within the research community.
Oggier et al.~\cite{oggier_hom, oggier_proj} presents coding schemes which facilitate local node repair. In~\cite{gopalan}, Gopalan et al. establishes an upper bound on the minimum distance of  scalar LRCs, which is analogous to the Singleton bound. The paper also showes that pyramid codes, presented in \cite{pyramid}, achieve this bound with information symbols locality. Subsequently, the work by Prakash et al. extends the bound to a more general definition of scalar LRCs~\cite{prakash}. (Han and Lastras-Montano~\cite{HL-M2007} provide a similar upper bound which is coincident  with the one in~\cite{prakash} for small minimum distances, and also present codes that attain this bound in the context of reliable memories.) In \cite{dimitris}, Papailiopoulos  and Dimakis generalize the bound in \cite{gopalan} to vector codes,  and present locally repairable coding schemes which exhibits MDS property at the cost of small amount of additional storage per node.

The main contributions of our paper are as follows. First, in Section~\ref{sec:preliminaries}, we generalize the definition of \emph{scalar} locally repairable codes, presented in~\cite{prakash} to \emph{vector} locally repairable codes. For such codes, every node storing $\alpha$ symbols from a given field $\F$, can be locally repaired by using data stored in at most $r$ other nodes from a group of nodes  of size $r+\delta-1<n$, which we call a \emph{local group}, where $n$ is the number of system nodes, and $r$ and $\delta$ are the given locality parameters.  Subsequently, in Section~\ref{sec:vectorLRC}, we derive an upper bound on the minimum distance $d_{\min}$ of the vector codes that satisfy a given locality constraint, which establishes a trade off between node failure resilience (i.e., $d_{\min}$) and per node storage $\alpha$. \footnote{In a parallel and independent work,\cite{Kumar12}, Kamath et al. also provide upper bounds on minimum distance together with constructions and existence results for vector LRCs.}
The bound presented in~\cite{dimitris} can be considered as a special case of our bound with $\delta=2$.
Further, we present an explicit construction for LRCs which attain this bound on  minimum distance.
This construction is based on maximum rank distance (MRD) Gabidulin codes, which are a rank-metric analog of Reed-Solomon codes.
The \emph{scalar} and \emph{vector} LRCs that are obtained by this construction are the first explicit optimal locally repairable codes with $(r+\delta-1)\nmid n$. Finally, in Section~\ref{sec:discussion}, we discuss how the scalar and vector codes obtained by this construction can be used for constructions of repair bandwidth efficient  LRCs. We conclude the paper with Section~\ref{sec:conclusions}.

\section{Background}\label{sec:preliminaries}

\subsection{System Parameters}

Let $\cM$ be the size of a file $\fv$ to be stored in a DSS with $n$ nodes.
All data symbols belong to a finite field $\F$. Each node stores
$\alpha$ symbols over $\F$.

\subsection{Locally Repairable Codes}
We generalize the definition of \emph{scalar} locally repairable codes, presented in~\cite{prakash} to \emph{vector} locally repairable codes, where each node $i$, $1\leq i\leq n$ stores a vector $\xv_i$ of length $\alpha$ over $\F$.

First, we provide an alternate definition of the minimum distance of a vector code~\cite{gopalan,dimitris}.

\begin{definition}\label{def:dmin}
The minimum distance $d_{\min}$ of a vector code $C$ of dimension $\cM$ is defined as
\begin{equation}
d_{\min} =  n - \max_{\Ac \subseteq [n]: H(\xv_{\Ac}) < \mathcal{M}}|\Ac|,
\end{equation}
where $\Ac = \{i_1,\ldots, i_{|\Ac|}\} \subseteq [n]$ and $\xv_{\Ac} = (\xv_{i_1},\ldots, \xv_{i_{|\Ac|}})$.
\end{definition}


It follows from the definition of $d_{\min}$ that  the system can tolerate any $d_{\min}-1$ node failures, or equivalently, a data collector can reconstruct the original data $\fv$ by contacting any set of $n-d_{\min}+1$ storage nodes in the DSS. We are interested in ensuring this property of the DSS for its entire life span through the course of multiple failures and repairs.

\begin{definition}
\label{def:vectorLRC}
We say that a vector code $C$ has
$(r,\delta)$ \emph{locality} if for each node $i$, $1\leq i\leq n$, storing vector $\xv_i$ (of length $\alpha$),
there exists a set of nodes $\Gamma(i)$ such that
\begin{itemize}
\item $i\in\Gamma(i)$
\item $|\Gamma(i)|\leq r+\delta - 1$
\item Minimum distance of $C|_{\Gamma(i)}$
is at least $\delta$.
\end{itemize}

Note that the last two properties imply that each element $j \in \Gamma(i)$ can be written as a function of a set of at most $r$ elements in $\Gamma(i)$ (not containing $j$) and that $H(\Gamma(i))\leq r\alpha$.

Codes that satisfy these properties are called $(r, \delta, \alpha)$ \emph{locally repairable codes} (LRCs).
\end{definition}

Note, that these codes are generalizations of vector LRCs given in~\cite{dimitris}, which considered only the $\delta=2$ case.

\begin{remark}  $(r,\delta,1)$ locally repairable codes are named as scalar $(r,\delta)$ locally repairable codes.
\end{remark}

Prakash et al.~\cite{prakash} provided the following upper bound on the minimum distance of an $(r,\delta,1)$ LRC:
\begin{equation}
\label{scalarUpperBound}
d_{\min}\leq n-\cM+1-\left(\left\lceil\frac{\cM}{r}\right\rceil-1\right)(\delta-1).
\end{equation}

It was established in~\cite{prakash} that a family of  pyramid codes, presented in~\cite{pyramid} attains this bound and has \emph{information locality}, i.e. only information symbols satisfy the locality constraint.
However, an explicit construction of optimal scalar LRCs with \emph{all-symbols locality} is known only for the case $n=\left\lceil\frac{M}{r}\right\rceil(r+\delta-1)$~\cite{prakash,HL-M2007}.
The \emph{existence} of optimal scalar codes with all-symbols locality is shown for the case when $(r+\delta-1)|n$ and field size $|\F|>\cM n^{\cM}$~\cite{prakash}.
In this paper, we provide an explicit construction of optimal scalar LRCs  with all-symbols locality without the restriction $(r+\delta-1)|n$.

The following upper bound on the minimum distance of $(r,2,\alpha)$ LRCs  and construction of codes that attain this bound was presented in~\cite{dimitris}:
\begin{equation}
\label{eq:dimitrisBound}
d_{\min}\leq n-\left\lceil\frac{\cM}{\alpha}\right\rceil-\left\lceil\frac{\cM}{r\alpha}\right\rceil+2
\end{equation}

In the sequel, we generalize this bound for any $\delta\geq 2$ and present $(r,\delta,\alpha)$  LRCs that attain this bound.

\subsection{Maximum Rank Distance (MRD) Codes}

For the construction
presented in this paper we propose a precoding of
the file with maximum rank distance codes~\cite{Gab85,Rot91}.

Let $\F_{q^m}$ be an extension field of $\F_q$. Since $\F_{q^m}$
can be also considered as an $m$-dimensional vector space over $\F_q$, any element  $\gamma\in\F_{q^m}$ can be represented  as the vector ${\gammav=(\gamma_1,\ldots, \gamma_m)\in \F_q^m}$ , such that $\gamma=\sum_{i=1}^{m}b_i\gamma_i$, for a fixed basis $\{b_1,\ldots, b_m\}$ of the field extension. Similarly, any vector $\bf{v}$ $=(v_1, \ldots, v_N)\in\F_{q^m}^N$ can be represented by an $m\times N$ matrix $\Vm=[v_{i,j}]$ over $\F_q$, where each entry $v_i$  of $\vv$ is expanded as a column vector $(v_{i,1},\ldots,v_{i,m})^T$.
\begin{definition}
The \emph{rank} of a vector $\vv\in\F_{q^m}^N$, denoted by $\rm{rank}(\vv)$, is defined as the rank of the $m\times N$ matrix $\Vm$ over $\F_q$. Similarly, for two vectors $\vv,\uv \in \F_{q^m}^N$, the \emph{rank distance} is defined by $d_R(\vv,\uv)=\rm{rank}(\Vm-\Um)$.
\end{definition}

An $[N,K,D]_{q^m}$ \textmd{rank-metric code} $\cC\subseteq\F_{q^m}^N$ is a linear block code over $\F_{q^m}$ of length $N$, dimension $K$  and minimum rank distance $D$. A rank-metric code that attains the Singleton bound $D\leq N-K+1$ in rank-metric is called \emph{maximum rank distance} (MRD) code.
For $m\geq N$, a construction of MRD codes was presented by Gabidulin~\cite{Gab85}.
In the similar way as Reed-Solomon codes, Gabidulin codes can be obtained by
evaluation of polynomials, however, for Gabidulin codes the special family of polynomials,
called \emph{linearized polynomials}, is used:

\begin{definition}
A linearized polynomial $f(x)$ over $\F_{q^m}$ of $q-$degree $t$ has the form $f(x)=\sum_{i=0}^{t}a_ix^{q^i}$,
where $a_i\in \F_{q^m}$, and $a_{t}\neq 0$.
\end{definition}
 Note, that evaluation of a linearized polynomial is an $\F_{q}-$linear
transformation from $\F_{q^m}$ to itself, i.e., for any $a, b \in \F_q$ and
$\gamma_1, \gamma_2\in\F_{q^m}$, we have $f(a\gamma_1 + b\gamma_2)= af(\gamma_1)+ bf(\gamma_2)$~\cite{MWSl78}.

A codeword in an $[N,K,D=N-K+1]_{q^m}$ Gabidulin code
$\cC^{\rm{Gab}}$, $m\geq N$, is defined as ${\mathbf{c} = (f(g_1),f(g_2),\ldots, f(g_N))\in\F_{q^m}^N}$, where $f(x)$ is a
linearized polynomial over $\F_{q^m}$ of $q-$degree $K-1$ with the coefficients given by the information message, and $g_1,\ldots, g_N\in \F_{q^m}$
are linearly independent over $\F_q$~\cite{Gab85}.

An MRD code $\cC^{Gab}$ with minimum distance $D$ can correct any $D-1=N-K$  erasures, which we will call \emph{rank erasures}.  An algorithm for erasures correction of Gabidulin codes can be found e.g. in~\cite{GaPi08}.
%

\subsection{MDS Array Codes}

\begin{definition}
A linear $[\alpha\times n, k, d_{\min}]$ \textit{array code} $C$ of dimensions $\alpha\times n$ over $\F_q$ is defined as a linear subspace of $\F_q^{\alpha n}$. Its
minimum distance $d_{\min}$ is defined as the minimum Hamming distance over  $\F_{q}^\alpha$, when
we consider the codewords of $C$ as vectors of length $n$ over $\F_{q}^\alpha\cong \F_{q^{\alpha}}$. An array code $C$ is
called a \textit{maximum distance separable} (MDS) code if $|C|=q^{\alpha k}$, where $k=n-d_{\min}+1$.
\end{definition}

Constructions for MDS array codes can be  found e.g. in~\cite{blaum,BlRo99, CaBr06}.
Note, that an MRD code (in the matrix representation) is also an MDS array code.

\subsection{Regenerating Codes}

Regenerating codes are a family  of codes for distributed storage  that allow for efficient repair of failed nodes. When using such codes we assume that a data collector can reconstruct the original file by downloading the
data stored in any set of $k$ out of $n$ nodes.
When a node fails, its content can be reconstructed by downloading $\beta\leq \alpha$ symbols from
any $d$, $k\leq d\leq n-1$, surviving nodes.
Given a file size $\mathcal{M}$, a trade-off between storage per node $\alpha$ and {\em repair bandwidth} $\gamma \triangleq d\beta$ can be established.
 Two classes of codes that achieve two extreme points of this trade-off are known as {\em minimum storage regenerating (MSR)} codes and {\em minimum bandwidth regenerating (MBR)} codes. The parameters $(\alpha,\gamma)$ for MSR and MBR codes are given by $\left(\frac{\cM}{k},\frac{\cM d}{k(d-k+1)}\right)$, $\left(\frac{2\cM d}{2kd-k^2+k},\frac{2\cM d}{2kd-k^2+k}\right)$, respectively~\cite{dimakis}.


%


\section{Optimal Locally Repairable Codes}
\label{sec:vectorLRC}

In this section,
we first derive an upper bound on the minimum distance of $(r, \delta, \alpha)$ locally repairable codes.
Next, we propose a general code construction which attains the derived bound on $d_{\min}$.
Our approach is to apply  a two-stage encoding, where we use Gabidulin codes (a rank-metric analog of Reed-Solomon codes)
 along with MDS array codes. This construction can be viewed as a generalization of the construction proposed in~\cite{SRS12}.



\subsection{Upper Bound on $d_{\min}$ for an $(r, \delta, \alpha)$ LRC}
\label{subsec:d_min_upper}

We state a generic upper bound on the minimum distance $d_{\min}$ of an $(r, \delta, \alpha)$ code $C$. This bound generalizes the  bound given in \cite{dimitris} for LRCs with a single local parity ($\delta = 2$) to LRCs with multiple local parities ($\delta \geq 2$).
\begin{theorem}\label{thm:dmin}
Let $C$ be an $(r, \delta, \alpha)$ LRC. Then, it follows that
\begin{align}
\label{eq:upp_bound}
&d_{\min}(C) \leq n - \ceilb{\frac{\mathcal{M}}{\alpha}} + 1 - \left(\ceilb{\frac{\mathcal{M}}{r\alpha}}-1\right)(\delta - 1).
\end{align}
\end{theorem}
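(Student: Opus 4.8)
The plan is to work entirely through the entropy characterization of Definition~\ref{def:dmin}. Since $d_{\min}(C) = n - \max\{|\Ac| : H(\xv_\Ac) < \cM\}$, establishing \eqref{eq:upp_bound} is equivalent to exhibiting a single coordinate set $\Ac \subseteq [n]$ that fails to determine the file, i.e. $H(\xv_\Ac) < \cM$, yet is as large as
\[
|\Ac| \;\ge\; \ceilb{\frac{\cM}{\alpha}} - 1 + \left(\ceilb{\frac{\cM}{r\alpha}} - 1\right)(\delta-1).
\]
Thus the whole argument reduces to a greedy construction of one large ``uninformative'' set, and nothing about the algebraic structure of $C$ is needed beyond the two consequences of $(r,\delta,\alpha)$-locality recorded after Definition~\ref{def:vectorLRC}: each local group satisfies $H(\xv_{\Gamma(i)}) \le r\alpha$ while $|\Gamma(i)| \le r+\delta-1$.

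First I would build $\Ac$ as a union of local groups, obtained from a growing chain $\emptyset = S_0 \subsetneq S_1 \subsetneq \cdots$, where at step $j$ I pick a node $i_j$ whose group is not yet absorbed ($\Gamma(i_j)\not\subseteq S_{j-1}$) and set $S_j = S_{j-1}\cup\Gamma(i_j)$. Monotonicity and subadditivity of entropy give the key bookkeeping inequality
\[
H(\xv_{S_j}) \;\le\; H(\xv_{S_{j-1}}) + H(\xv_{\Gamma(i_j)}) \;\le\; H(\xv_{S_{j-1}}) + r\alpha,
\]
so after $t-1$ steps, with $t \deff \ceilb{\cM/(r\alpha)}$, we still have $H(\xv_{S_{t-1}}) \le (t-1)r\alpha < \cM$ by the definition of the ceiling. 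Each fresh group adds up to $r+\delta-1$ coordinates while charging only $r\alpha$, i.e.\ $r$ coordinates' worth of entropy; in the ideal disjoint case this is a profit of $\delta-1$ ``free'' coordinates per group. Having spent $(t-1)r\alpha$ of the budget, I would then append individual coordinates one at a time, each contributing at most $\alpha$, stopping just before reaching $\cM$; the number of such singletons is $\ceilb{\cM/\alpha} - (t-1)r - 1$, which keeps $H(\xv_\Ac) < \cM$. Summing the two phases, the coordinate count telescopes to
\[
(t-1)(r+\delta-1) + \ceilb{\frac{\cM}{\alpha}} - (t-1)r - 1 \;=\; \ceilb{\frac{\cM}{\alpha}} - 1 + (t-1)(\delta-1),
\]
which is exactly the target size after substituting $t-1 = \ceilb{\cM/(r\alpha)}-1$.

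The main obstacle is that local groups need not be disjoint, so a carelessly chosen $\Gamma(i_j)$ may contribute far fewer than $r+\delta-1$ genuinely new coordinates, and the per-group profit of $\delta-1$ assumed by the clean telescoping can collapse. I expect the bulk of the technical effort to lie in the selection rule together with an invariant that tames these overlaps. Concretely I would either (i) show that as long as $H(\xv_{S_{j-1}}) < \cM$ one can always select a group adding a guaranteed number of fresh coordinates, or (ii) replace the informal profit count by the stronger inductive invariant $H(\xv_{S_j}) \le \bigl(|S_j| - j(\delta-1)\bigr)\alpha$, which would absorb partial overlaps automatically and still force $|S_j|$ to grow fast enough before the entropy crosses $\cM$; note, however, that maintaining such an invariant is precisely where the delicate choice of $i_j$ enters, since the naive incremental estimate $H(\xv_{\Gamma(i_j)\setminus S_{j-1}}) \le r\alpha$ is by itself too weak when the overlap carries little entropy. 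A secondary, routine nuisance is the bookkeeping of the ceiling functions and the boundary cases where $r\alpha \nmid \cM$ or where the last group would overshoot $\cM$ and must be truncated; these are disposed of by the singleton-padding phase and do not alter the final expression.
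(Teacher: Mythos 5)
Your proposal takes essentially the same route as the paper: the paper's proof likewise reduces the bound, via Definition~\ref{def:dmin}, to exhibiting a set $\mathcal{A}$ with $H(\xv_{\mathcal{A}})<\mathcal{M}$ and $|\mathcal{A}|\ge\ceilb{\mathcal{M}/\alpha}-1+\left(\ceilb{\mathcal{M}/(r\alpha)}-1\right)(\delta-1)$, built greedily from local groups following the technique of Gopalan et al.\ and Papailiopoulos--Dimakis, with the detailed overlap bookkeeping deferred to the cited full version~\cite{RKSV12}. Your greedy chain, the per-group profit of $\delta-1$ coordinates, the singleton-padding phase, and the overlap issue you flag are precisely the content of that deferred argument.
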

\begin{proof}
We follow the proof technique of \cite{gopalan, dimitris}. In particular, the proof involves construction of a set of nodes $\mathcal{A}$ for a locally repairable DSS such that total entropy of the symbols stored in $\mathcal{A}$ is less than $\Mc$ and
\begin{align}
\label{eq:setA}
|\mathcal{A}| \geq \ceilb{\frac{\mathcal{M}}{\alpha}} - 1 + \left(\ceilb{\frac{\mathcal{M}}{r\alpha}}-1\right)(\delta - 1).
\end{align}
Theorem~\ref{thm:dmin} then follows from Definition~\ref{def:dmin} and \eqref{eq:setA}. See~\cite{RKSV12} for the detailed proof.
\end{proof}

Remarkably, the theorem above establishes a trade-off between node failure resilience (i.e., $d_{\min}$) and per node storage $\alpha$, where $\alpha$ can be increased  to obtain higher $d_{\min}$. This is of particular interest in the design of codes having both locality and strong resilience to node failures.

\begin{remark}
For the special case of $\delta=2$, this bound matches with the bound~(\ref{eq:dimitrisBound}) presented in~\cite{dimitris}. For the case of $\alpha=1$, the bound reduces to $d_{\min} \leq n-\cM+1+(\ceilb{\cM/r}-1) (\delta-1)$, which is coincident with the bound~(\ref{scalarUpperBound}) presented in~\cite{prakash}.
\end{remark}


\subsection{Construction of $d_{\min}$-Optimal Vector LRCs}
\label{subsec:optimal_local_repairable}
In this subsection we present a construction of an $(r,\delta,\alpha)$ LRC which attains the bound given in Theorem~\ref{thm:dmin}.

\textbf{Construction I.} Consider a file $\fv$ over $\F=\F_{q^m}$ of size $\cM \geq r\alpha$. We encode the file in two steps before storing it on DSS. First, the file is encoded using a Gabidulin code. The codeword  of the Gabidulin code is then partitioned into local groups and each local group is then encoded using an MDS array code over $\mathbb{F}_q$.

In particular, let $\cM, r,\delta,\alpha, N, m$ be the positive integers such that $m\geq N\geq \cM\geq r\alpha$, and
 let $\cC^{\rm{Gab}}$ be an $[N, \cM, D = N-\cM+1]_{q^m}$ Gabidulin  code. We assume in this construction that $\alpha|N$. We denote by $g=\left\lceil\frac{N}{r\alpha}\right\rceil$ the number of local groups in the system.
\begin{itemize}
  \item If $\frac{N}{\alpha}\equiv 0(\textmd{mod } r)$ then a codeword $\cv \in \cC^{\rm{Gab}}$ is first partitioned into $g=\frac{N}{r\alpha}$ disjoint groups, each of size $r\alpha$, and each group is stored on a different set of $r$ nodes, $\alpha$ symbols per node. In other words, the output of the first encoding step generates the encoded data stored on $rg$ nodes, each one containing $\alpha$ symbols of a (folded) Gabidulin codeword. Second, we generate $\delta-1$ parity nodes per group by applying an $[\alpha\times (r+\delta-1),r,\delta]$ MDS array code over $\F_q$
      on each local group of $r$ nodes, treating these $r$ nodes as input data blocks (of length $\alpha$) for the MDS array code. At the end of the second round of encoding, we have $n=(r+\delta-1)g=\frac{N}{\alpha}+\frac{N}{r\alpha}(\delta-1)$ nodes, each storing $\alpha$ symbols over $\mathbb{F}_{q^{m}}$, partitioned into $g$ local groups, each of size $r-\delta+1$.

  \item If $\frac{N}{\alpha}\equiv \beta_0 (\textmd{mod } r)$, for $0 < \beta_0\leq r-1$, then a codeword $\cv \in \cC^{\rm{Gab}}$ is first partitioned into $g-1=\lfloor\frac{N}{r \alpha }\rfloor$ disjoint groups of size $r\alpha$ and one additional group of size $\beta_0\alpha$, the first $g-1$ groups are stored on $r(g-1)$ nodes, and the last group is stored on $\beta_0$ nodes, each one containing $\alpha$ symbols of a (folded) Gabidulin codeword. Second, we generate $\delta-1$ parity nodes per group by applying an $[\alpha\times (r+\delta-1),r,\delta]$ MDS array code over $\F_{q}$  on each of the first $g-1$ local groups of $r$ nodes and by applying a $[\alpha \times (\beta_0+\delta-1),\beta_0,\delta]$ MDS array code over $\F_{q}$ on the last local group.  At the end of the second round of encoding, we have $n=(r+\delta-1)(g-1)+(\beta_0+\delta-1)=\frac{N}{\alpha}+\left\lceil\frac{N}{r\alpha}\right\rceil(\delta-1)$ nodes, each storing $\alpha$ symbols over $\mathbb{F}_{q^{m}}$, partitioned into $g$ local groups, $g-1$ of which of size $r-\delta+1$ and one group of size $\beta_0+\delta-1$.
\end{itemize}

We denote the obtained code by $C^{\rm{loc}}$.

\begin{remark}
\label{rm:linearized property}
Note, that since an MDS array code from Construction I is defined over $\F_q$,
any symbol of any node of $C^{\rm{loc}}$ can be written as
$\sum_{j=1}^{r \alpha}a_j c_{i_j}=\sum_{j=1}^{r \alpha}a_j f(g_{i_j})=f(\sum_{j=1}^{r \alpha}a_j g_{i_j})$, where  $a_j\in \F_q$,  $c_{i_j}\in \F_{q^m}$ are $r \alpha$ symbols of the same group of $\cv$, and $g_{i_j}$, are linearly independent over $\F_q$  evaluation points.
Hence, \emph{any} $s\leq r \alpha$ symbols inside a group of $C^{\rm{loc}}$ are evaluations of $f(x)$ in $s$ linearly independent over $\F_q$ points. (If there is a group with $\beta_0<r$ elements we have the same result substituting $r$ with $\beta_0$). Thus, \emph{any} $\delta-1+i$ \emph{node} erasures in a group correspond to $i \alpha$ \emph{rank} erasures.
Moreover, if we take any $r \alpha$ symbols of $C^{\rm{loc}}$  from every group (and $\alpha \beta_0$ symbols from the smallest group, if it exists), we obtain a Gabidulin codeword, for a corresponding choice of evaluation points for a Gabidulin code, which encodes the given data $\cM$.

\end{remark}

\begin{theorem}
\label{trm:vector_optimality}
Let $C^{\rm{loc}}$ be the $(r, \delta, \alpha)$ locally repairable code $C^{\rm{loc}}$ obtained by Construction I. Then, for $N\geq \cM$, s.t. $\alpha|N$,
\begin{itemize}
  \item If $(r \alpha )|N$
  then $C^{\rm{loc}}$ attains the bound~(\ref{eq:upp_bound}).
  \item If  $\frac{N}{\alpha}(\textmd{mod } r)\geq \left\lceil\frac{\cM}{\alpha}\right\rceil(\textmd{mod } r)>0 $, then $C^{\rm{loc}}$ attains the bound~(\ref{eq:upp_bound}).
\end{itemize}

\end{theorem}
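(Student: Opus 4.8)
The plan is to pair the upper bound already established in Theorem~\ref{thm:dmin} with a matching lower bound. Writing $d^{\ast}$ for the right-hand side of~\eqref{eq:upp_bound}, Theorem~\ref{thm:dmin} gives $d_{\min}(C^{\rm{loc}})\le d^{\ast}$, so it suffices to prove $d_{\min}(C^{\rm{loc}})\ge d^{\ast}$. By Definition~\ref{def:dmin} this is equivalent to exhibiting a decoder that recovers the file $\fv$ from the surviving nodes under \emph{every} erasure pattern of at most $d^{\ast}-1$ failed nodes.

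First I would run the decoder in two stages mirroring the two-stage encoding. Let $e_j$ denote the number of failed nodes in local group $j$. Each group carries an $[\alpha\times(r+\delta-1),r,\delta]$ (respectively $[\alpha\times(\beta_0+\delta-1),\beta_0,\delta]$) MDS array code, so if $e_j\le\delta-1$ the local code repairs the whole group for free, while if $e_j>\delta-1$ Remark~\ref{rm:linearized property} says the residual loss amounts to exactly $\alpha\max(0,e_j-(\delta-1))$ rank erasures of the underlying Gabidulin codeword $\cv$. The key structural input, again from Remark~\ref{rm:linearized property}, is that the surviving symbols of any group (at most $r\alpha$, respectively $\beta_0\alpha$) are evaluations of the linearized polynomial $f$ at $\F_q$-linearly independent points, and that these evaluation spaces are independent across groups, since their direct sum is the span of the original $N$ evaluation points. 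Hence the number of evaluations of $f$ at linearly independent points available after local decoding equals $N-R$, where $R=\alpha\sum_j\max(0,e_j-(\delta-1))$ is the total number of induced rank erasures. Because the MRD Gabidulin code corrects up to $D-1=N-\cM$ rank erasures -- equivalently, any $\cM$ evaluations at linearly independent points determine the $q$-degree-$(\cM-1)$ polynomial $f$ through invertibility of the associated Moore matrix -- the file is recoverable precisely when $R\le N-\cM$.

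The core of the argument is then the combinatorial claim that every pattern of at most $d^{\ast}-1$ erased nodes yields $R\le N-\cM$. I would instead compute the minimum number of node erasures that forces $R>N-\cM$ and show it equals $d^{\ast}$. Since $R$ is a multiple of $\alpha$ and $\alpha\mid N$, a short ceiling computation shows that $R>N-\cM$ is equivalent to $R\ge\alpha P$ with $P=\tfrac{N}{\alpha}-\ceilb{\tfrac{\cM}{\alpha}}+1$. As $e\mapsto\max(0,e-(\delta-1))$ is convex and is capped at $r$ (respectively $\beta_0$) per group, the most-damaging-per-erasure pattern concentrates failures, filling as few groups as possible: opening any group wastes the $\delta-1$ ``free'' erasures absorbed by its local code, so the cheapest way to accumulate a rank-erasure budget $P$ is to fill $\ceilb{P/r}$ groups, largest first (the short group enters only when the full groups cannot supply $P$). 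This gives a uniform minimum count of $P+\ceilb{P/r}(\delta-1)$ node erasures.

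It then remains to reconcile this count with $d^{\ast}$. Using $n=\tfrac{N}{\alpha}+\ceilb{\tfrac{N}{r\alpha}}(\delta-1)$ one rewrites $d^{\ast}=P+(\delta-1)\bigl(\ceilb{\tfrac{N}{r\alpha}}-\ceilb{\tfrac{\cM}{r\alpha}}+1\bigr)$, so the whole theorem reduces to the single identity
\[
\ceilb{\tfrac{1}{r}\Bigl(\tfrac{N}{\alpha}-\ceilb{\tfrac{\cM}{\alpha}}+1\Bigr)}=\ceilb{\tfrac{N}{r\alpha}}-\ceilb{\tfrac{\cM}{r\alpha}}+1 .
\]
I would verify this with the nested-ceiling identity $\ceilb{\ceilb{\cM/\alpha}/r}=\ceilb{\cM/(r\alpha)}$ together with $\ceilb{K/r}=\floorb{(K-1)/r}+1$. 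In the divisible case $(r\alpha)\mid N$ it is a direct simplification. In the second case, writing $\tfrac{N}{\alpha}=r(g-1)+\beta_0$ and $\ceilb{\cM/\alpha}=r\kappa+\mu$, the hypothesis $\beta_0\ge\mu>0$ is exactly what forces $1\le\beta_0-\mu+1\le r-1$, so the remainder contributes a clean $+1$ and the identity holds; once $\mu>\beta_0$ the short group's limited capacity $\beta_0$ breaks the match, which is why the statement is confined to these two regimes. I expect this final arithmetic reconciliation -- in particular tracking the ceilings through the asymmetric short group in the second case -- to be the main obstacle, whereas the decoding reduction and the convexity/concentration argument are comparatively routine.
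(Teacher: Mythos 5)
Your proposal is correct and follows essentially the same route as the paper's Appendix~\ref{ap:appendix}: reduce node erasures to rank erasures via Remark~\ref{rm:linearized property}, observe that the worst case concentrates failures in as few local groups as possible, and check that the resulting rank-erasure count stays below $D=N-\cM+1$. The only difference is organizational -- you package the paper's five-way case split on the residues of $N$ and $\cM$ into a single ceiling identity $\ceilb{P/r}=\ceilb{N/(r\alpha)}-\ceilb{\cM/(r\alpha)}+1$ (which does hold under the two stated hypotheses, with $\cM\ge r\alpha$ guaranteeing enough full groups), so the substance is the same.
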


\begin{proof}
The proof  is based on Remark~\ref{rm:linearized property} and the observation that any
 $n - \ceilb{\frac{\mathcal{M}}{\alpha}} - \left(\ceilb{\frac{\mathcal{M}}{r\alpha}}-1\right)(\delta - 1)$ node erasures correspond to at most $D-1$ rank erasures which can be corrected by the Gabidulin code $\cC^{\rm{Gab}}$.
 See the details in Appendix~\ref{ap:appendix}.
  \end{proof}

Next, we write the conditions for the construction of $d_{\min}$-optimal code $C^{\rm{loc}}$
in terms of the given system parameters $n,\cM,r, \delta, \alpha$.

\begin{theorem}
\label{cor.parameters}
Let $C^{\rm{loc}}$ be a code obtained by Construction~I.
\begin{itemize}
  \item If $ (r+\delta-1)|n$, then $C^{\rm{loc}}$ is optimal, with the length of the corresponding Gabidulin code equal to  $N=\frac{n r \alpha}{r+\delta-1}$ and the field size $\F\geq q^N$.
  \item If $n(\textmd{mod } r+\delta-1)-(\delta-1)\geq
 \left\lceil\frac{\cM}{\alpha}\right\rceil(\textmd{mod } r)>0$
  then $C^{\rm{loc}}$ is optimal, with  the length of the corresponding Gabidulin code equal to $N=\alpha\left(n-\delta+1-(\delta-1)\left\lfloor\frac{n}{r+\delta-1}\right\rfloor\right)$ and the field size $\F\geq q^N$.
\end{itemize}

\end{theorem}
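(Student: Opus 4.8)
The plan is to reconcile the two optimality criteria of Theorem~\ref{trm:vector_optimality}, which are phrased in terms of the Gabidulin length $N$, with equivalent criteria on the externally given system parameters $n,\cM,r,\delta,\alpha$. Since Construction~I takes $N$ as input and produces a code whose node count $n$ is determined by $N$, the task reduces to inverting this map $N\mapsto n$ in each of the two branches of the construction and then rewriting the modular conditions accordingly.

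For the first bullet, recall that when $(r\alpha)\mid N$ the construction produces $g=\frac{N}{r\alpha}$ local groups and $n=(r+\delta-1)g$ nodes. Hence $(r+\delta-1)\mid n$ holds exactly when we may set $g=\frac{n}{r+\delta-1}$ and $N=r\alpha g=\frac{nr\alpha}{r+\delta-1}$. For this $N$ we have $\frac{N}{\alpha}=rg\equiv 0\pmod r$, so $(r\alpha)\mid N$ and the first bullet of Theorem~\ref{trm:vector_optimality} applies, yielding optimality, and the Gabidulin code has length exactly this $N$.

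For the second bullet I would use the other branch of the construction, where $n=(r+\delta-1)(g-1)+(\beta_0+\delta-1)$ with $\frac{N}{\alpha}\equiv\beta_0\pmod r$ and $0<\beta_0\le r-1$. Since $\delta\ge 2$, the term $\beta_0+\delta-1$ satisfies $0<\beta_0+\delta-1<r+\delta-1$, so it is precisely the residue $n\bmod(r+\delta-1)$, while $\floorb{\frac{n}{r+\delta-1}}=g-1$. Consequently $n\bmod(r+\delta-1)-(\delta-1)=\beta_0=\frac{N}{\alpha}\bmod r$, and therefore the condition $n\bmod(r+\delta-1)-(\delta-1)\ge \ceilb{\frac{\cM}{\alpha}}\bmod r>0$ is exactly the condition $\frac{N}{\alpha}\bmod r\ge \ceilb{\frac{\cM}{\alpha}}\bmod r>0$ required by Theorem~\ref{trm:vector_optimality}. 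Solving $n=(g-1)(r+\delta-1)+\beta_0+\delta-1$ for $\frac{N}{\alpha}=(g-1)r+\beta_0$ gives $\frac{N}{\alpha}=n-\delta+1-(\delta-1)\floorb{\frac{n}{r+\delta-1}}$, i.e.\ the stated value of $N$.

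Finally, because the underlying field is $\F=\F_{q^m}$ with $m\ge N$, the field size is at least $q^N$ in both cases, as claimed. The content here is entirely bookkeeping; the only points requiring care are confirming that the residue $\beta_0+\delta-1$ genuinely lands in the open interval $(0,r+\delta-1)$ so that the floor/residue identities hold, and checking that the derived $N$ is an admissible Gabidulin length --- integral, divisible by $\alpha$, and at least $\cM$ --- so that Construction~I and hence Theorem~\ref{trm:vector_optimality} may legitimately be invoked.
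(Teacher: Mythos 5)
Your proof is correct and follows exactly the route the paper intends (the paper states this theorem without an explicit proof, treating it as a direct translation of the two conditions of Theorem~\ref{trm:vector_optimality} from the variable $N$ to the system parameter $n$ via the relation $n=\frac{N}{\alpha}+\lceil\frac{N}{r\alpha}\rceil(\delta-1)$). Your verification that $\beta_0+\delta-1$ is the genuine residue of $n$ modulo $r+\delta-1$, and the resulting identities $\lfloor n/(r+\delta-1)\rfloor=g-1$ and $n\bmod(r+\delta-1)-(\delta-1)=\frac{N}{\alpha}\bmod r$, are precisely the bookkeeping needed.
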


\begin{remark} For the case $\alpha=1$ Construction I provides $d_{\min}$-optimal scalar LRCs.
Note that this is the first explicit construction of optimal scalar locally repairable codes with $(r+\delta-1)\nmid n$.
\end{remark}

\begin{remark}
The required field size  $|\F|=q^m$ for the proposed construction should satisfy $m\geq N$, for any choice of $q$. So we can assume that $|\F|=q^N$, for $N$ given in Theorem~\ref{cor.parameters}. Note that
we can reduce the field size
 to $|\F|=q^{N/\alpha}$  by stacking~\cite{Kumar12} of $\alpha$ independent optimal scalar LRCs, obtained by Construction I.
 \end{remark}

We illustrate the construction of $C^{\rm{loc}}$ in the following examples. First we consider the scalar case.


\begin{example}
Consider the following system parameters:
$$(\cM,n,r,\delta,\alpha)=(9,14,4,2,1).
$$
Let $N=14-2+1-1\cdot \left\lfloor\frac{14}{4+2-1}\right\rfloor=11$.
First, $\cM=9$ symbols over $\F=\F_{2^{11}}$  are encoded into a codeword $\cv$ of a $[11,9,3]_{2^{11}}$ Gabidulin code $\cC^{\rm{Gab}}$. This codeword is partitioned into three groups, two of size $4$ and one of size $3$, as follows: $\cv=(a_1,a_2,a_3,a_4| b_1,b_2,b_3,b_4|c_1,c_2,c_3)$. Then, by applying a $[5,4,2]$ MDS code in the first two groups and a $[4,3,2]$ MDS code in the last group we add one parity to each group. The symbols of $\cv$ with three new parities $p_a,p_b,p_c$ are stored on 14 nodes as shown in Fig~\ref{fig:construction1}.
\begin{figure}[t]
 \centering
 \includegraphics[width=1\columnwidth]{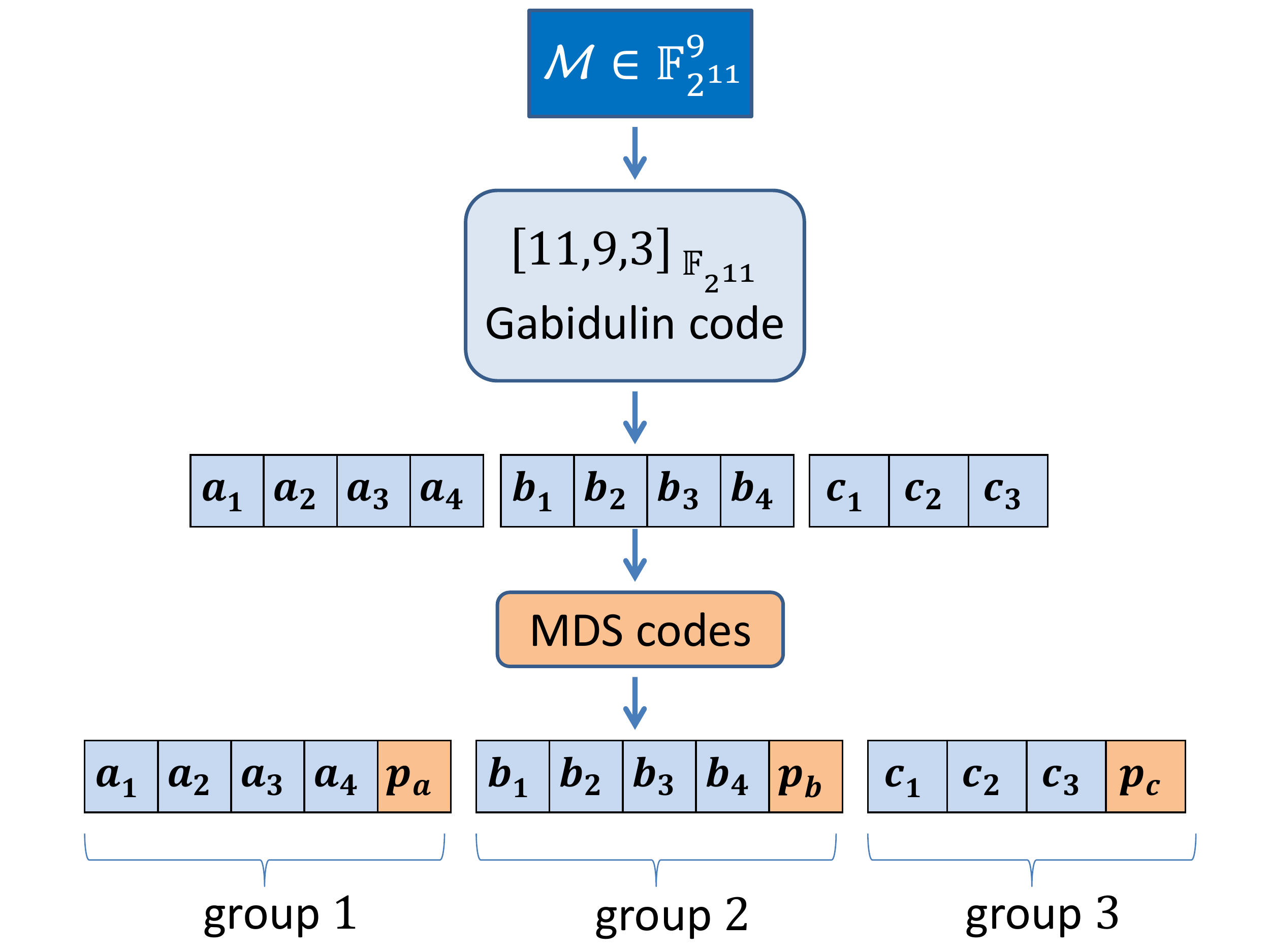}
 \caption{Illustration of the construction of a scalar ${(r = 4, \delta = 2, \alpha = 1)}$ LRC for $n = 14, \cM=9$ and ${d_{\min}=4}$.} \label{fig:construction1}
\end{figure}
By Theorem~\ref{thm:dmin}, the minimum distance $d_{\min}$ of this code is at most $4$. By Remark~\ref{rm:linearized property}, any $3$ node erasures correspond to at most $2$ rank erasures and then can be corrected by  $\cC^{\rm{Gab}}$, hence $d_{\min}=4$.
In addition, when a single node fails, it can be repaired by using the data stored on other nodes from the same group.

\end{example}

Next, we illustrate Construction I for a vector LRC.

\begin{example}
We consider a DSS with the following parameters:
$$(\cM,n,r,\delta,\alpha)=(28,15,3,3,4).
$$
By~(\ref{eq:upp_bound}) we have $d_{\min}\leq 5$.  Let $N=\frac{15\cdot 3\cdot 4}{3+3-1}=36$ and $(a_1,\ldots, a_{12}, b_1,\ldots, b_{12}, c_1,\ldots, c_{12})$ be a codeword  of an $[36,28,9]_{q^{36}}$ code $\cC^{\textmd{Gab}}$, which is obtained by encoding $\cM = 28$ symbols over $\F=\F_{q^{36}}$ of the original file. The Gabidulin codeword is then divided into three groups $(a_1,\ldots, a_{12})$, $(b_1,\ldots, b_{12})$, and $(c_1,\ldots, c_{12})$. Encoded symbols in each group are stored on three storage nodes as shown in Fig.~\ref{fig:construction}. In the second stage of encoding, a $[4\times 5,3,3]$ MDS array code over $\F_{q}$  is applied on each local group to obtain $\delta-1 = 2$ parity nodes per local group. The coding scheme is illustrated in Fig.~\ref{fig:construction}.

%

\begin{figure}[h]
 \centering
 \includegraphics[width=1\columnwidth, clip=true, trim=0 0 0 240]{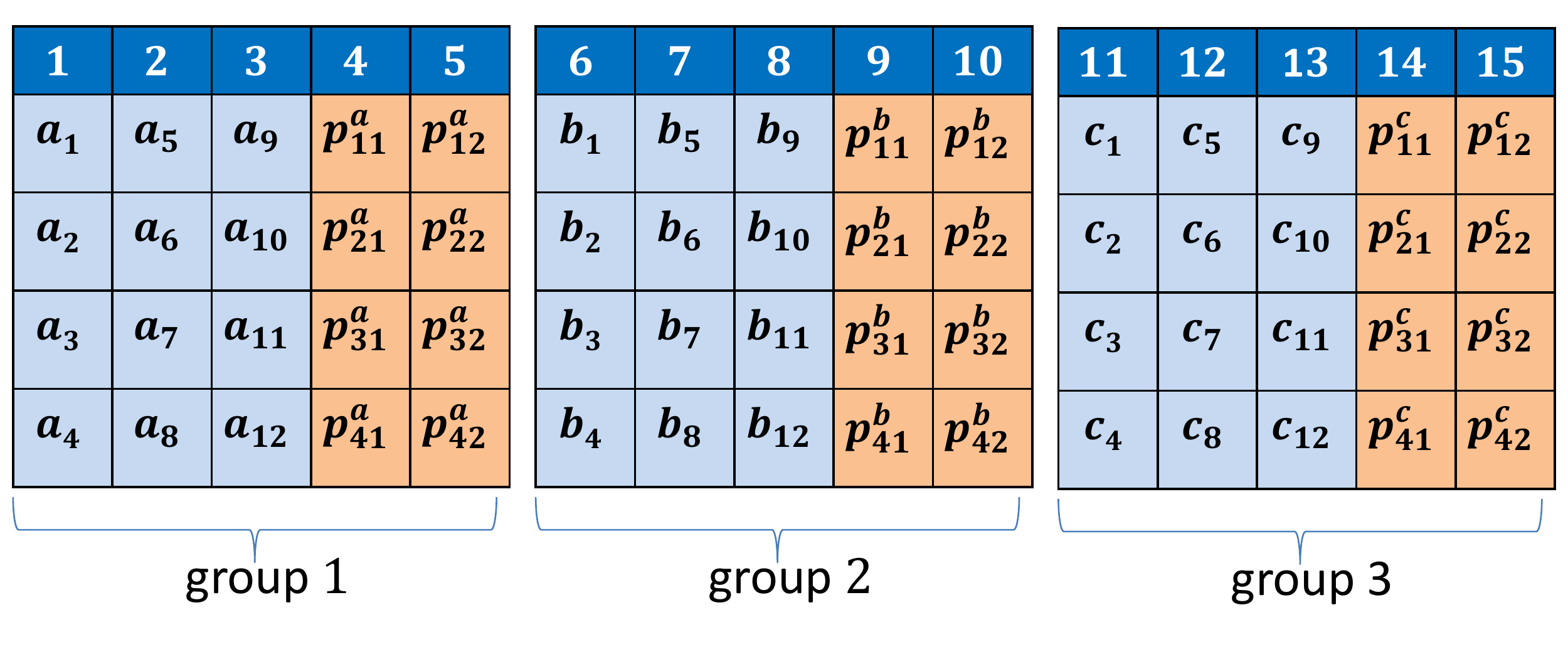}
 \caption{Example of an $(r = 3, \delta = 3, \alpha = 4 )$ locally repairable code with $n = 15$ and $d_{\min}=5$.} \label{fig:construction}
\end{figure}

By Remark~\ref{rm:linearized property}, any $4$ node failures correspond to at most $8$ rank erasures in the corresponding codeword of $\cC^{\textmd{Gab}}$. Since the minimum rank distance of $\cC^{Gab}$ is $9$, these node erasures can be corrected by $\cC^{Gab}$, and thus the minimum distance of $C^{loc}$ is exactly $5$.
\end{example}

\begin{remark}
The efficiency of the decoding of the codes obtained by Construction I depends on the efficiency of the decoding of
the MDS codes and the Gabidulin codes.

\end{remark}

%
%
\section{Repair Efficient LRCs }
\label{sec:discussion}

In this section, we discuss the hybrid codes which for a given locality parameters minimize the repair bandwidth. These codes are based on
 a combination of locally repairable codes with regenerating codes.

In a na\"{\i}ve repair process for a locally repairable code, a newcomer contacts $r$ nodes in its local group and downloads all the data stored on these nodes. 
 Following the line of work of bandwidth efficient repair in DSS given by~\cite{dimakis}, we allow a newcomer to contact $d\geq r$ nodes  in its local group and to download only  $\beta\leq\alpha$ symbols stored in these nodes in order to repair the failed node. The motivation behind this is to lower the repair bandwidth for a LRC. So the idea here is to apply a regenerating code in each local group. (We note that, in a parallel and independent work, Kamath et al. \cite{Kumar12} also proposed utilizing regenerating codes in the context of LRCs.)

In particular, by applying an $(r+\delta-1,r,d,\alpha,\beta)$ MSR code in each local group instead of an MDS array code in the second step of Construction I we obtain a code, denoted by MSR-LRC, which has the maximal minimum distance
(since an MSR code is also an MDS array code), the local minimum storage per node, and the minimized repair bandwidth. (The details of this construction can be found in~\cite{RKSV12}.)

In addition, the optimal scalar codes obtained by Construction I can be used for construction of MBR-LRCs (codes with an MBR code in each local group) as it has been shown by Kamath et al. \cite{Kumar12}.

%


\section{Conclusion}
\label{sec:conclusions}
We presented a novel construction for (scalar and vector) locally repairable codes. This construction is based on maximum rank distance codes. We derived an upper bound on minimum distance for vector LRCs and proved that our construction provides optimal codes for both scalar and vector cases.
We also discussed  how the codes obtained by this construction can be used to construct repair bandwidth efficient  LRCs.




\appendices
\section{Proof of Theorem~\ref{trm:vector_optimality}}
\label{ap:appendix}
To prove that $C^{\rm{loc}}$ attains the bound~(\ref{eq:upp_bound}) we need to show that any $E\triangleq n - \ceilb{\frac{\mathcal{M}}{\alpha}}  - \left(\ceilb{\frac{\mathcal{M}}{r\alpha}}-1\right)(\delta - 1)$ node erasures can be corrected by $C^{\rm{loc}}$. For this purpose we will prove that any $E$ erasures of $C^{\rm{loc}}$  correspond to at most $D-1$ rank erasures of the underlying Gabidulin code and thus can be corrected by the code $\cC^{\rm{Gab}}$. Here, we point out the the worst case erasure pattern is when the erasures appear in the smallest possible number of groups and the number of erasures inside a local group is maximal.


Let $\alpha_0, \alpha_1, \beta_1, \gamma_1$ be the integers such that $N=\alpha (\alpha_0 r+\beta_0)$, $\cM=\alpha(\alpha_1 r+\beta_1)+\gamma_1$, $\alpha_0\geq \alpha_1$, $0\leq \beta_0,\beta_1\leq r-1$, and $0\leq \gamma_1\leq \alpha-1$. Then
\begin{equation}
\label{eq:rank_erasures}
D-1=N-\cM=r \alpha (\alpha_0-\alpha_1)+\alpha(\beta_0-\beta_1)-\gamma_1.
\end{equation}
Then, given $n=\frac{N}{\alpha}+\left\lceil\frac{N}{r \alpha}\right\rceil(\delta-1)$, we can rewrite the bound~(\ref{eq:upp_bound}) in the following way.

\begin{equation}
\label{eq:bound_rewrite}
d_{\min}-1\leq\frac{N}{\alpha}-\left\lceil\frac{M}{\alpha}\right\rceil+\left(\left\lceil\frac{N}{r \alpha}\right\rceil-\left\lceil\frac{M}{r \alpha}\right\rceil+1\right)(\delta-1).
\end{equation}
\begin{enumerate}
  \item Let $(r\alpha)|N$. Then $\beta_0=0$ and $\left\lceil\frac{N}{r \alpha}\right\rceil=\alpha_0$.
  \begin{itemize}
    \item If $\gamma_1=\beta_1=0$ then $\left\lceil\frac{M}{\alpha}\right\rceil=\alpha_1 r$ and $\left\lceil\frac{M}{r \alpha}\right\rceil=\alpha_1$. In this case by~(\ref{eq:bound_rewrite}) we have $d_{\min}-1\leq (r+\delta-1)(\alpha_0-\alpha_1)+\delta-1$. Hence, in the worst case we have $(\alpha_0-\alpha_1)$ groups with all the erased nodes and one additional group with $\delta-1$ erased nodes, which by Remark~\ref{rm:linearized property} corresponds to $r \alpha$ rank erasures in $(\alpha_0-\alpha_1)$ groups of the corresponding Gabidulin codeword. Since by~(\ref{eq:rank_erasures}), $D-1=r \alpha(\alpha_0-\alpha_1)$, this erasures can be corrected by the Gabidulin code.

    \item If $\gamma_1=0$, $\beta_1>0$ then $\left\lceil\frac{M}{\alpha}\right\rceil=\alpha_1 r+\beta_1$ and $\left\lceil\frac{M}{r \alpha}\right\rceil=\alpha_1+1$. Then by~(\ref{eq:bound_rewrite}) we have $d_{\min}-1\leq (r+\delta-1)(\alpha_0-\alpha_1-1)+(r-\beta_1+\delta-1)$. Hence, in the worst case we have $(\alpha_0-\alpha_1-1)$ groups with all the erased nodes and one additional group with $r-\beta_1+\delta-1$ erased nodes, which by Remark~\ref{rm:linearized property} corresponds to $r \alpha (\alpha_0-\alpha_1)-\alpha \beta_1=D-1$ rank erasures that can be corrected by the Gabidulin code.

    \item If $\gamma_1>0$ then $\left\lceil\frac{M}{\alpha}\right\rceil=\alpha_1 r+\beta_1+1$ and $\left\lceil\frac{M}{r \alpha}\right\rceil=\alpha_1+1$.
     Then by~(\ref{eq:bound_rewrite}) we have $d_{\min}-1\leq (r+\delta-1)(\alpha_0-\alpha_1-1)+(r-\beta_1-1+\delta-1)$. Hence, in the worst case we have $(\alpha_0-\alpha_1-1)$ groups with all the erased nodes and one additional group with $r-\beta_1-1+\delta-1$ erased nodes, which corresponds to $r \alpha (\alpha_0-\alpha_1)-\alpha \beta_1-\alpha<D-1$ rank erasures that can be corrected by the Gabidulin code.
  \end{itemize}

  \item Let $\frac{N}{\alpha}(\textmd{mod } r)\geq \frac{\cM}{\alpha}(\textmd{mod } r)>0 $. Then, $\beta_0\geq \beta_1>0$ and $\left\lceil\frac{N}{r \alpha}\right\rceil=\alpha_0+1$.

  \begin{itemize}
       \item If $\gamma_1=0$ then $\left\lceil\frac{M}{\alpha}\right\rceil=\alpha_1 r+\beta_1$ and $\left\lceil\frac{M}{r \alpha}\right\rceil=\alpha_1+1$. Then by~(\ref{eq:bound_rewrite}), we have $d_{\min}-1\leq (r+\delta-1)(\alpha_0-\alpha_1)+(\beta_0-\beta_1+\delta-1)$. Hence, in the worst case we have $(\alpha_0-\alpha_1)$ groups with all the erased nodes and one additional group with $\beta_0-\beta_1+\delta-1$ erased nodes (or $\beta_0+\delta-1$ erased nodes in the smallest group, $(\alpha_0-\alpha_1-1)$ groups with all the erased nodes and one group with $r-\beta_1+\delta-1$ erased nodes). This by Remark~\ref{rm:linearized property} corresponds to $r \alpha (\alpha_0-\alpha_1)-\alpha(\beta_0- \beta_1)=D-1$ rank erasures that can be corrected by the Gabidulin code.

    \item If $\gamma_1>0$ then $\left\lceil\frac{M}{\alpha}\right\rceil=\alpha_1 r+\beta_1+1$ and $\left\lceil\frac{M}{r \alpha}\right\rceil=\alpha_1+1$.
     Then by~(\ref{eq:bound_rewrite}) we have $d_{\min}-1\leq (r+\delta-1)(\alpha_0-\alpha_1)+(\beta_0-\beta_1-1+\delta-1)$. Hence, in the worst case we have $(\alpha_0-\alpha_1)$ groups with all the erased nodes and one additional group with $\beta_0-\beta_1-1+\delta-1$ erased nodes (or $\beta_0+\delta-1$ erased nodes in the smallest group, $(\alpha_0-\alpha_1-1)$ groups with all the erased nodes and one group with $r-\beta_1-1+\delta-1$ erased nodes). This by Remark~\ref{rm:linearized property} corresponds to $r \alpha (\alpha_0-\alpha_1)-\alpha(\beta_0- \beta_1)-\alpha<D-1$ rank erasures that can be corrected by the Gabidulin code.
  \end{itemize}
\end{enumerate}

\end{document}